\documentclass[a4paper,USenglish,cleveref, autoref, thm-restate]{lipics-v2021}


\usepackage{mathtools}
\usepackage{nicefrac}
\usepackage[ruled,vlined,linesnumbered]{algorithm2e}
\usepackage{csquotes}

\usepackage{tikz}
\usetikzlibrary{decorations.pathreplacing}
\usetikzlibrary{math}
\usepackage{xspace}

\usepackage{cite}


\newcommand{\problem}{\textnormal{USM}\xspace}
\newcommand{\eps}{\varepsilon}

\newcommand{\Opt}{\operatorname{\text{\textsc{opt}}}}

\newcommand{\machs}{\mathcal{M}}
\newcommand{\jobs}{\mathcal{J}}
\newcommand{\maxload}{\ell_{\max}}
\newcommand{\load}{\ell}
\newcommand{\oldload}{\hat{\ell}}
\newcommand{\newload}{\check{\ell}}
\newcommand{\startload}{\bar{\ell}}
\newcommand{\shiftload}{\tilde{\ell}}
\newcommand{\oldjobs}{\hat{\jobs}}
\newcommand{\newjobs}{\check{\jobs}}
\newcommand{\potential}{\pi}
\newcommand{\Oh}{\mathcal{O}}

\DeclarePairedDelimiter\set{\lbrace}{\rbrace}
\DeclarePairedDelimiterX\sett[2]{\lbrace}{\rbrace}{ #1 \,\delimsize| \,\mathopen{} #2 }


\hideLIPIcs  


\bibliographystyle{plainurl}

\title{Online Load Balancing on Uniform Machines with Limited Migration} 


\author{Marten Maack}{Heinz Nixdorf Institute \& Department of Computer Science, Paderborn University, Paderborn, Germany}{marten.maack@hni.uni-paderborn.de}{https://orcid.org/0000-0001-7918-6642}{}

\authorrunning{M. Maack} 

\Copyright{Marten Maack} 

\ccsdesc[100]{Theory of computation~Online algorithms}
\ccsdesc[100]{Theory of computation~Scheduling algorithms}

\keywords{Scheduling, Load Balancing, Online, Uniform Machines, Migration} 

\category{} 

\relatedversion{} 


\funding{This work was supported by the German Research Foundation (DFG) within the project "Robuste Online-Algorithmen für Scheduling- und Packungsprobleme" under the project number JA 612 /19-1 and within the Collaborative Research Centre “On-The-Fly Computing” under the project number 160364472 -- SFB 901/3.}

\acknowledgements{}

\nolinenumbers 

\EventEditors{John Q. Open and Joan R. Access}
\EventNoEds{2}
\EventLongTitle{42nd Conference on Very Important Topics (CVIT 2016)}
\EventShortTitle{CVIT 2016}
\EventAcronym{CVIT}
\EventYear{2016}
\EventDate{December 24--27, 2016}
\EventLocation{Little Whinging, United Kingdom}
\EventLogo{}
\SeriesVolume{42}
\ArticleNo{23}

\begin{document}

\maketitle

\begin{abstract}
In the problem of online load balancing on uniformly related machines with bounded migration, jobs arrive online one after another and have to be immediately placed on one of a given set of machines without knowledge about jobs that may arrive later on.
Each job has a size and each machine has a speed, and the load due to a job assigned to a machine is obtained by dividing the first value by the second.
The goal is to minimize the maximum overall load any machine receives.
However, unlike in the pure online case, each time a new job arrives it contributes a migration potential equal to the product of its size and a certain migration factor. 
This potential can be spend to reassign jobs either right away (non-amortized case) or at any later time (amortized case).
Semi-online models of this flavor have been studied intensively for several fundamental problems, e.g., load balancing on identical machines and bin packing, but uniformly related machines have not been considered up to now.
In the present paper, the classical doubling strategy on uniformly related machines is combined with migration to achieve an $(8/3+\eps)$-competitive algorithm and a $(4+\eps)$-competitive algorithm with $\Oh(1/\eps)$ amortized and non-amortized migration, respectively, while the best known competitive ratio in the pure online setting is roughly $5.828$.
\end{abstract}

\section{Introduction}
\label{sec:intro}

Consider the problem of online load balancing on uniformly related machines with bounded migration (or \problem for short):
Jobs from a set $\jobs$ arrive one after another and have to be immediately placed on a machine from a set $\machs$ without knowledge about jobs that may arrive later on.
Each job $j$ has a size $p_j$, each machine $i$ a speed $s_i$, and the load of job $j$ on machine $i$ is given by $p_j/s_i$.
The load $\ell_i$ of a machine $i$ is the summed up load of all the jobs assigned to $i$ and the goal is to minimize the maximum machine load $\load_{\max} = \max_{i\in\machs}\load_i$. 
This would conclude the description of the problem in the pure online setting, but in this work a semi-online model is considered.
In particular, the arrival of a new job $j$ allows the reassignment of jobs with total size at most $\mu p_{j}$, i.e., we have a \emph{migration potential} of $\mu p_{j}$ due to $j$.
The parameter $\mu$ is called the \emph{migration factor} and we consider two variants, i.e., either the full migration potential of a job $j$ has to be used directly after the arrival of $j$ or any time later.
We refer to the former case as non-amortized and to the second as amortized and talk about the (non-)amortized migration factor as well.
Like in the pure online setting, we consider the competitiveness of an algorithm as a measure of quality, i.e., the worst case ratio between the objective value of a solution found by the algorithm and the optimal objective value for a given instance.
Additionally, we want to bound the migration factor and therefore consider a trade-off between this factor and the competitive ratio.

\subparagraph{Literature Review.} 
Regarding the pure online setting (without migration), there is a classical result due to Aspnes et al.\xspace\cite{DBLP:conf/stoc/AspnesAFPW93} that utilizes a guess-and-double framework to achieve an 8-competitive algorithm for the problem.
The best competitive ratio of $\approx 5.828$ was achieved by Bar-Noy et al.\xspace\cite{BarNoyFN00}, while Ebenlendr and Sgall \cite{DBLP:journals/mst/EbenlendrS15} showed that no better competitive ratio than $\approx 2.564$ can be achieved.
Regarding non-amortized migration, Sanders et al.\xspace\cite{DBLP:journals/mor/SandersSS09} considered the identical machine case where all machine speeds are equal.
In particular, they designed several simple algorithms with small competitive ratios and small non-amortized migration factors as well as a family of $(1+\eps)$-competitive algorithms with a non-amortized migration factor exponential in $1/\eps$.
Skutella and Verschae \cite{DBLP:journals/mor/SkutellaV16} generalized the later result to the dynamic case where jobs may also depart but using amortized migration.
Uniformly related machines in the dynamic case and with respect to amortized migration have been studied by Westbrook \cite{DBLP:journals/jal/Westbrook00} who provided an $(8+\eps)$-competitive algorithms with an amortized migration factor polynomial in $1/\eps$ for each $\eps >0$.
Finally, Andrews et al. \cite{DBLP:journals/algorithmica/AndrewsGZ99} gave a $32$-competitive algorithm using a generalized notion of amortized migration -- where the migration cost may not be proportional to job size -- and achieving a factor of $79.4$.
There are many more results considering the (non-)amortized migration factor primarily regarding scheduling and packing problems (see, e.g., \cite{DBLP:conf/waoa/BerndtDGJK19,DBLP:journals/mp/BerndtJK20,DBLP:journals/talg/GalvezSV20,DBLP:conf/icalp/FeldkordFGGKRW18,DBLP:journals/orl/BerndtEM22,DBLP:conf/soda/GuptaKS14,DBLP:journals/mp/EpsteinL09,DBLP:journals/siamjo/EpsteinL13,DBLP:conf/esa/EpsteinL11}).

\subparagraph{Results.}

In this work, we revisit the online doubling approaches \cite{DBLP:conf/stoc/AspnesAFPW93,BarNoyFN00} utilizing migration.
In particular, we develop an algorithmic framework combining the guess-and-double approach with a straightforward migration strategy. 
There are slight differences depending on whether amortized or non-amortized migration is used.
Otherwise, the framework is governed by three parameters influencing the doubling rate, amount of migration, and precise placement strategy.
Using the arguably most intuitive choice for one of these parameters leads to the first result, namely, a $(3+\eps)$-competitive algorithm for each $\eps>0$ with amortized migration factor $\Oh(1/\eps)$.
The attempt of transferring this result to the case with non-amortized migration motivates the closer examination of the said parameter, which enables the second result -- a $(4+\eps)$-competitive algorithm for each $\eps \in (0,8]$ with non-amortized migration factor $\Oh(1/\eps)$.
Finally, taking insights from the second result back to the first yields an $(8/3+\eps)$-competitive algorithm for each $\eps >0$ with amortized migration factor $\Oh(1/\eps)$. 
Hence, all three results clearly beat the best pure online approach with very moderate migration with the last result approaching the lower bound for the pure online case.

\section{Algorithmic Framework}
\label{sec:algo}

Throughout this paper, we assume $\machs=[m]$, $s_1\geq s_2\geq \dots\geq s_m$, and denote the optimal (offline) objective value of the input instance $I$ as $\Opt(I)$.
Furthermore, we typically consider a new job $j^*$ that has to be scheduled and denote the instance and the job set up to and including $j^*$ as $I^*$ and $\jobs^*$, respectively.

We start this section with a short account of the classical online algorithm by Aspnes et al.\xspace\cite{DBLP:conf/stoc/AspnesAFPW93},
then introduce the central ideas and notation leading to the description of the algorithmic framework.

\subparagraph{The Classical Online Algorithm.}

The following approach is due to Aspnes et al.\xspace\cite{DBLP:conf/stoc/AspnesAFPW93}.
If the optimum objective value $\Opt(I)$ of the instance $I$ is known, a schedule with maximum load at most $2\Opt(I)$ can be constructed online by assigning the jobs to the slowest machine on which it may be processed without exceeding the load bound of $2\Opt(I)$.
This can be used to achieve an $8$-competitive algorithm via a guess-and-double strategy that works in phases.
In each phase the above algorithm is applied using a guess $T$ of $\Opt(I)$ and not taking into account the load that was accumulated in the previous phases.
When the first job $j_1$ arrives we set $T=p_1/s_1$ and apply the above algorithm until a job $j^*$ cannot be inserted on any machine without exceeding the bound $2T$.
This gives a proof that $T < \Opt(I^*)$.
We then double the guess $T$ and start the next phase.
During each phase the load on any machine can obviously be upper bounded by $2T\sum_{i=0}^{\infty}1/2^i=4T$.
So, after scheduling the current job $j^*$, we have $\maxload < 4T$ and $1/2T<\Opt(I^*)$, yielding $\maxload< 8\Opt(I^*)$.
This is depicted in the first part of \cref{fig:load_structure}.
\begin{figure}
\centering
\begin{tikzpicture}[scale=0.4]
\tikzmath{
\xx3 = 4.5; 
\x9 = 0;
\xx1 = \x9 + \xx3;
\xx2 = \xx1 / 2;
\x1 = \xx1 + 5.5; 
\x2 = \x1 + \xx3;
\x3 = \x1 - 2;
\x4 = \x2 + 1.5;
\x5 = \x2 + 6;
\x6 = \x5 + \xx3;
\x7 = \x5 - 2;
\x8 = \x6 + 1.5;
}
\pgfmathsetmacro{\epsi}{0.18}


\draw[dashed,thick] (\x9,0) rectangle (\xx1  ,1);

\draw[fill = black!7!white,rounded corners] (\x9 + \epsi,1 + \epsi) rectangle (\xx1 -\epsi , 2 - \epsi);
\draw[fill = black!7!white,rounded corners] (\x9 + \epsi,2 + \epsi) rectangle (\xx1 -\epsi , 3.5 - \epsi);
\draw[fill = black!7!white,rounded corners] (\x9 + \epsi,4 + \epsi) rectangle (\xx1 -\epsi , 7.0 - \epsi);
\draw[fill = black!7!white,rounded corners] (\x9 + \epsi,8 + \epsi) rectangle (\xx1 -\epsi , 14.5 - \epsi);

\foreach \i in {0,...,3}
{	
\draw[thick] (\x9,{2^\i}) rectangle (\xx1  , {2^(\i+1)});
\draw[thick, dashed] (\x9,{3*2^(\i-1)}) -- (\xx1 ,{3*2^(\i-1)});
}
\foreach \i in {1,2,3}
{
	\draw [fill] (\xx2,1 - \i * 0.2) circle [radius=0.02];
}
\draw[thick,decorate,decoration={brace,amplitude=10pt,raise=1pt,mirror},yshift=0pt] (\xx1 ,0) -- (\xx1 ,16) node [midway,xshift=20pt]{$4T$};
\draw[thick,decorate,decoration={brace,amplitude=6pt,raise=1pt},yshift=0pt] (\x9,8) -- (\x9,12) node [midway,xshift=-15pt]{$T$};


\draw[fill = black!20!white,rounded corners] (\x1 + \epsi,\epsi) rectangle (\x2 - \epsi, 2 - 2*\epsi) node [midway]{$\wedge$};
\draw[fill = black!20!white,rounded corners] (\x1+ \epsi,8- \epsi) rectangle (\x2- \epsi, 5+ \epsi) node [midway]{$\vee$};
\draw[fill = black!20!white,rounded corners] (\x1+ \epsi,5) rectangle (\x2- \epsi, 2+2*\epsi) node [midway, yshift = - 0.1 cm]{$\vee$};
\draw[thick] (\x1,0) rectangle (\x2, 8);
\draw[thick, dashed] (\x1,2) -- (\x2,2);
\draw[thick, dashed] (\x1,4) -- (\x2,4);
\draw[thick, ->, >=latex] (\x3,0) -> (\x3,4) node [midway,xshift=-10pt]{$\oldjobs_i$};
\draw[thick, ->, >=latex] (\x4,8) -> (\x4,0) node [midway,xshift=10pt]{$\newjobs_i$};
\draw[thick,decorate,decoration={brace,amplitude=3pt,raise=1pt},yshift=0pt] (\x1,0) -- (\x1,2) node [midway,xshift=-12pt]{$\frac{1}{2}T$};
\draw[thick,decorate,decoration={brace,amplitude=4pt,raise=1pt,mirror},yshift=0pt] (\x2,4) -- (\x2,8) node [midway,xshift=10pt]{$T$};


\draw[fill = black!7!white,rounded corners] (\x5 + \epsi,\epsi) rectangle (\x6 - \epsi, 1.5) node [midway, yshift = - 0.1 cm]{$\wedge$};
\draw[fill = black!7!white,rounded corners] (\x5 + \epsi,12- \epsi) rectangle (\x6 - \epsi, 6+ \epsi) node [midway]{$\vee$};
\draw[thick] (\x5,0) rectangle (\x6, 12);
\draw[thick, dashed] (\x5,1) -- (\x6,1);
\draw[thick, dashed] (\x5,2) -- (\x6,2);
\draw[thick, dashed] (\x5,3) -- (\x6,3);
\draw[thick, dashed] (\x5,4) -- (\x6,4);
\draw[thick, dashed] (\x5,8) -- (\x6,8);
\draw[thick, ->, >=latex] (\x7,0) -> (\x7,3) node [midway,xshift=-10pt]{$\oldjobs_i$};
\draw[thick, ->, >=latex] (\x8,12) -> (\x8,0) node [midway,xshift=10pt]{$\newjobs_i$};
\draw[thick,decorate,decoration={brace,amplitude=2pt,raise=1pt},yshift=0pt] (\x5,0) -- (\x5,1) node [midway,xshift=-12pt]{$\frac{1}{4}T$};
\draw[thick,decorate,decoration={brace,amplitude=4pt,raise=1pt},yshift=0pt] (\x6,12) -- (\x6,8) node [midway,xshift=10pt]{$T$};

\end{tikzpicture}
\caption{Sketches of the load on a single machine $i$ for the classical online algorithm, the first amortized approach and the non-amortized approach depicted from left to right.
Regarding the classical online algorithm the light gray boxes representing sets of jobs.
For the first amortized approach, we have $\xi =2$, $\gamma = 2/3$, and the critical case with two new jobs of size roughly $0.75T$ and one remaining old job of size roughly $0.5T$ is depicted (with the gray boxes representing the jobs).
For the non-amortized approach, we have $\xi = 4$, $\eta =2$, $\gamma = 1/2$, and a non-critical case is depicted where half of the possible new load arrived and half of the old load was migrated (with the light gray boxes representing sets of jobs).
}
\label{fig:load_structure}
\end{figure}
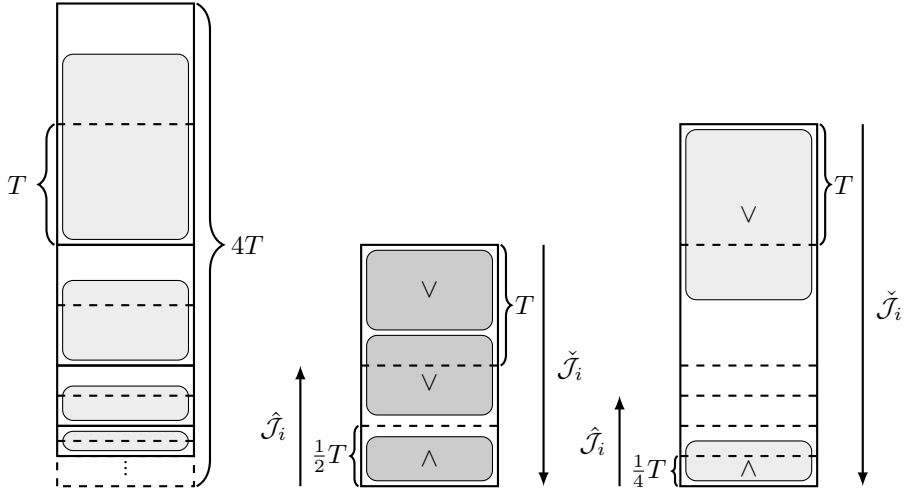

\subparagraph{Ideas and Notation.}

The basic idea of our approach is to use migration to avoid the accumulation of load from previous phases.
Instead, for each machine $i$ there is only old load $\oldload_i$ from the previous phase and new load $\newload_i$ from the current phase. 
Correspondingly, the algorithms maintain a partition of the jobs scheduled on a machine $i$ into \emph{old} jobs $\oldjobs_i$ and \emph{new} jobs $\newjobs_i$.
During each phase jobs are moved from $\oldjobs_i$ to $\newjobs_i$ or migrated away from $i$.
This is done in such a way that at the end of each phase we have a proof that the old guess for the maximum load was too small.
The notation was chosen with the following intuition in mind:
The old jobs form a stack from the bottom up, the new jobs a stack from the top down, the stack of new jobs grows during the phase while the stack of old jobs drops, and the migration must ensure that they never meet. 

There are several, important parameters for the algorithm that determine both the competitive ratio and the (non-)amortized migration factor:
First there is the \enquote{doubling} factor~$\xi$. 
When it gets apparent that the old guess of the maximum load was not high enough it is multiplied by $\xi$.
In the classical algorithm, we have $\xi=2$, hence the term \emph{doubling}.

Next, there is $\gamma<1$.
This parameter is used to restrict the (amortized) migration a job $j$ can cause \emph{directly}, meaning that we consider migration potential machine-wise and a job $j$ assigned to machine $i$ contributes a potential of $\gamma p_j$.
Jobs that are migrated using this potential in turn contribute a potential to the machines they are assigned to leading to even more migration.
The overall migration potential in this process is charged to the initial job and can be upper bounded by a geometric series yielding an overall (amortized) migration factor of $\sum_{i=0}^{\infty}\gamma^{i} = 1/(1-\gamma)$.  

There is a priority queue for the jobs, where jobs with higher processing time have a higher priority.
When a new job $j^*$ arrives an insertion procedure is started and $j^*$ is moved to the initially empty queue.
Jobs are taken from the queue and are inserted on some machine from which some jobs may be removed and moved to the queue.
The insertion procedure terminates when the queue is empty.

Following the notation from \cite{BarNoyFN00}, we call a machine $i$ \emph{eligible} for a job $j$, if $p_j/s_i$ is at most as big as the current guess $T$, 
and \emph{saturated}, if the new load $\newload_{i} = \sum_{j\in\newjobs_{i}} p_j/s_i$ is at least as big as $T$.
Furthermore, we refer to a machine $i$ as $\eta$-eligible for a job $j$, if $p_j/s_i\leq \eta T$, i.e., eligibility is the same as $1$-eligibility.
The third and final parameter of the algorithmic framework is $\eta\geq 1$.

\subparagraph{The Framework.}

As mentioned before, we consider and maintain migration potential machine-wise and guarantee that these potentials can be charged to the migration potentials of the newly arriving jobs. 
For a job $j'$ that is on the first position of the priority queue and has to be scheduled, we consider the slowest $\eta$-eligible machine $i'$ that is not yet saturated.
Next, old jobs $j\in\oldjobs_{i'}$ are moved to $\newjobs_{i'}$ if $p_j\geq \eta^{-1}p_{j'}$.
These jobs are chosen such that all slower machines eligible for them are already saturated.
Therefore, they can be utilized for proving that the current guess of the maximum load is too small in case that a job cannot be inserted.
If $i'$ remains not saturated afterwards, $j'$ contributes a migration potential of $\gamma p_{j'}$ to $i'$ which is used -- together with accumulated potential in the amortized case -- to remove old jobs from $i'$ in non-increasing order which are moved to the queue.
In this case $j'$ is assigned to $i'$ and moved to $\newjobs_{i'}$.
Otherwise, we consider the next slowest eligible machine that is not saturated.
These steps are carefully designed such that the load on each machine can be properly bounded at any time.
Lastly, if every machine is saturated or not eligible for $j'$, the guess $T$ was too small.
Therefore, we multiply it by $\xi$ and for each machine $i$ all assigned jobs are considered old, i.e., we move them from $\newjobs_i$ to $\oldjobs_i$. 
The resulting algorithmic framework is summarized and made precise in \cref{algo:insertion}.
\begin{algorithm}[t]
\While{$Q$ not empty}{
	Take the next job $j'$ from $Q$\;
	\While(\tcc*[f]{search for machine to place job}){true}{
		\eIf(\tcc*[f]{there are candidate machines}){$\tilde{\machs}(\eta,j')\neq \emptyset$}{
			Choose a slowest machine $i'\in \tilde{\machs}(\eta,j')$\;
			\lForEach{$j\in\oldjobs_{i'}$ with $p_j\geq \eta^{-1} p_{j'} $}{move $j$ from $\oldjobs_{i'}$ to $\newjobs_{i'}$}
			\lIf(\tcc*[f]{found suitable machine}){$i'$ is not saturated}{break}
		}(\tcc*[f]{no suitable machine, increase guess}){
		$T=\xi T$\;
		\lForEach{$i\in\machs$}{move all jobs from $\newjobs_{i}$ to $\oldjobs_{i}$ [, and set $\potential_i=0$]}
		}
	}
	$\potential = \gamma p_{j'}$ [$+\ \potential_{i'}$]\;
	\ForEach(\tcc*[f]{migrate jobs}){$j\in\oldjobs_{i'}$ ordered non-increasingly by job size}{
		\If{$p_j\leq \potential$}{
			$\potential = \potential - p_{j}$\;
			move $j$ from $\oldjobs_{i'}$ to $Q$\;
			}
	}
	move $j'$ to $\newjobs_{i'}$ [and set $\potential_{i'} = \potential$] \tcc*[r]{assign job}
}
\caption{
The algorithmic framework.
The parts in brackets relate to the amortized case and in this case a potential $\potential_i$ is maintained for each machine $i$. 
For a job $j$, we denote the set of $\eta$-eligible machines for $j$ that are not saturated as $\tilde{\machs}(\eta,j)$. 
A priority queue $Q$ is maintained.
When the very first job $j_1$ arrives we set $T=p_1/s_1$ and place $j_1$ on the first machine.
Afterwards, when a new job $j^*$ arrives it is placed in $Q$ and the following insertion procedure is used. 
}
\label{algo:insertion}
\end{algorithm}

\subparagraph{First Results.}

We collect some basic results regarding the framework.
During each run of the insertion procedure each job enters the queue at most once due to the ordering of the queue.
Furthermore, each job $j'$ that has to be inserted will be inserted after the guess of the objective value was increased at most $\Oh(\log p_{j'})$ many times, yielding:
\begin{remark}
The procedure terminates and has a polynomial running time.
\end{remark}
Next note that the non-amortized version of the algorithm indeed does not use any migration potential accumulated in previous rounds.
Moreover, each job $j$ gives the machine it is scheduled on a migration potential of $\gamma p_j$ (which has to be used directly or not depending on the case).
Jobs with overall size at most $\gamma p_j$ may be migrated using this potential, in turn adding a migration potential of at most $\gamma^2 p_j$ when inserted.
Hence, the overall migration potential added due to $j$ can be bounded by a geometric series:
\begin{remark}\label{rem:migration_factor}
The (non-)amortized migration factor of the algorithms is upper bounded by $\sum_{k=0}^{\infty}\gamma^k=1/(1-\gamma)$. 
\end{remark}
Lastly, we note that the doubling is justified.
\begin{lemma}\label{lem:doubling_correct}
The algorithmic framework maintains $\xi^{-1}T < \Opt(I^*)$.
\end{lemma}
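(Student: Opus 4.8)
The plan is to prove the invariant by induction over the run of the algorithm, exploiting that the guess $T$ is modified only at the multiplication step $T=\xi T$. Between two such steps $T$ is constant while $\Opt(I^*)$ can only grow as jobs arrive, so the inequality $\xi^{-1}T<\Opt(I^*)$ is preserved once it holds. For the base case, when the first job $j_1$ is placed we have $T=p_1/s_1=\Opt(I^*)$ (a single job is best placed on the fastest machine), and $\xi>1$ gives $\xi^{-1}T<\Opt(I^*)$. Hence everything reduces to the inductive step: whenever the insertion of a queued job $j'$ fails, i.e.\ $\tilde{\machs}(\eta,j')=\emptyset$ — which is exactly when $T$ is multiplied by $\xi$ — I must show that the \emph{current} guess satisfies $T<\Opt(I^*)$. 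Then after the update $\xi^{-1}(\xi T)=T<\Opt(I^*)$ re-establishes the invariant for the new guess.

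The key structural fact I would isolate is the invariant $(\star)$: for every new job $j\in\newjobs_i$, every machine $i''$ of index larger than $i$ (i.e.\ not faster than $i$) that is eligible for $j$ is saturated. This holds because $j$ becomes new on $i$ either as the inserted job placed on a \emph{slowest} $\eta$-eligible unsaturated machine (so all slower $\eta$-eligible, hence all slower $1$-eligible, machines are already saturated, as $\eta\ge 1$), or by being moved from $\oldjobs_i$ to $\newjobs_i$, for which the framework already guarantees that all slower eligible machines are saturated. The property persists within a phase because $\newload_i$ only increases until the next guess update, so a saturated machine stays saturated; at a guess update all new jobs become old and $(\star)$ holds vacuously.

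For the inductive step I would argue by contradiction, assuming $\Opt(I^*)\le T$ and fixing an optimal assignment $\sigma$ with all loads at most $T$. I would use the standard feasibility (Hall-type) necessary condition: writing $r(j)$ for the number of machines eligible for $j$, any schedule of makespan at most $T$ satisfies, for every $k$, that the total size of the jobs with $r(j)\le k$ is at most $T\sum_{i=1}^{k}s_i$, since such jobs can only be placed on the $k$ fastest machines. I would then take $k^*$ to be the length of the maximal saturated prefix $\set{1,\dots,k^*}$ of machines. Failure means that every $\eta$-eligible machine for $j'$ is saturated; since these form the prefix $\sett{i}{s_i\ge p_{j'}/(\eta T)}\supseteq\sett{i}{s_i\ge p_{j'}/T}$, we get $k^*\ge r(j')$, so $j'$ itself satisfies $r(j')\le k^*$. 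Moreover, by $(\star)$ every new job placed on a machine of $\set{1,\dots,k^*}$ is ineligible for the first unsaturated machine $k^*+1$ and therefore also satisfies $r(j)\le k^*$. Summing the saturation bounds $\newload_i\ge T$ over $i\le k^*$ and adding the not-yet-placed job $j'$ yields total size at least $p_{j'}+T\sum_{i=1}^{k^*}s_i>T\sum_{i=1}^{k^*}s_i$ among jobs with $r(j)\le k^*$, contradicting the feasibility condition. Hence $T<\Opt(I^*)$.

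The main obstacle I anticipate is the correct choice of the counting prefix and the matching of the two notions of eligibility: the failure condition is phrased via $\eta$-eligibility whereas $\sigma$ respects $1$-eligibility, and small new jobs can genuinely sit on fast machines, so a naive \enquote{large jobs on fast machines} count does not close. Choosing $k^*$ as the maximal saturated prefix is what lets $(\star)$ confine all of its new jobs to the prefix while simultaneously keeping $j'$ inside it, which is the crux; the remaining care concerns ties in the speeds (resolved by ordering machines by index) and the degenerate cases $k^*=0$ (not even the fastest machine is $\eta$-eligible, so $j'$ alone violates the condition) and $k^*=m$ (all machines saturated).
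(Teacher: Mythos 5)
Your proposal is correct and takes essentially the same route as the paper's proof: the same key invariant (every job entering $\newjobs_i$ has all slower eligible machines saturated, established via the choice of a slowest unsaturated $\eta$-eligible machine together with $\eta\geq 1$), followed by the same volume contradiction over a saturated prefix of machines whose new jobs, plus $j'$, exceed the prefix capacity $T\sum_i s_i$. The only substantive difference is that for jobs shifted from $\oldjobs_{i'}$ to $\newjobs_{i'}$ you cite the invariant as a promise of the framework, whereas the paper actually derives it from the threshold $p_j\geq \eta^{-1}p_{j'}$ (any slower machine eligible for such $j$ is $\eta$-eligible for $j'$, hence saturated) — a two-line verification you should include — while your maximal-saturated-prefix $k^*$ and explicit Hall-type phrasing are a slightly cleaner packaging of the paper's minimal eligibility-closed prefix $m'$, and your remark on speed ties addresses a detail the paper silently glosses over.
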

\begin{proof}
First note that the initial guess for the objective value that is made when the first job arrives is tight and hence the statement is true in the beginning.
We will now show that whenever the guess of the objective value is increased, it was indeed too small.

Note that whenever a job is inserted into the set of new jobs $\newjobs_{i'}$, all slower machines eligible for the job are already saturated.
For the jobs that are newly inserted on a machine $i'$ this is clear, because we always consider a slowest $\eta$-eligible machine that is not saturated.
For the jobs that are moved from $\oldjobs_{i'}$ to $\newjobs_{i'}$ it is easy to see as well:
A job $j$ is moved from $\oldjobs_{i'}$ to $\newjobs_{i'}$ when the assignment of a job $j'$ on $i'$ is considered.
In this case we have $p_j\geq \eta^{-1} p_{j'}$ and all slower machines $\eta$-eligible for $j'$ are saturated.
Now, let $i$ be a slower machine eligible for $j$.
This implies $T\geq p_j/s_{i} \geq \eta^{-1} p_{j'}s_{i}$.
Hence, $j'$ is $\eta$-eligible on $i$ and therefore $i$ has to be saturated. 

Whenever the guess of the maximum load is increased, we have a job $j'$ for which each $\eta$-eligible machine is saturated.
The case in which there is no $\eta$-eligible machine for the job is trivial and hence we assume that $j'$ is $\eta$-eligible on machine $1$.
Since this machine is saturated there have to be new jobs filling this machine.
Each of these jobs (as well as $j'$) may also be eligible on slower machines which have to be saturated as well due to the arguments above. 
Hence, let $m'\leq m$ be the minimal machine such that all machines in $\set{1,\dots, m'}$ are saturated and no job placed as a new job on one of these machines is also eligible on a slower machine.
In any assignment with an objective value at most $T$ all the jobs from $\set{j'} \cup \bigcup_{i\in[m']}\newjobs_i$ have to be placed on $\set{1,\dots, m'}$ but have an overall size of more than $\sum_{i\in[m']}Ts_i$.
Hence no such assignment exists.
\end{proof}

\section{Results}

In the following, we discuss different possibilities for setting the parameters $\xi$, $\gamma$, and $\eta$. 
In each case, we mainly have to bound the overall load on each machine and the old load in particular.
More precisely, we want to upper bound the load on each machine by $(1+\eta)T$ yielding a competitive ratio of $(1+\eta)\xi$.
Hence, the competitive ratio that may be realized using the approach taken in this work is lower bounded by $2$ since $\eta \geq 1$ and $ \xi > 1$.

\subsection{First Amortized Approach}

It is not immediately clear why $\eta > 1$ should be considered since larger values of $\eta$ yield a larger objective value as well.
Hence, we first consider the algorithm for $\eta = 1$ and try to upper bound the load on each machine by $(1+\eta) T = 2T$ at all times. 
For this we first look at some examples to guide the choice of the remaining two parameters.

First, consider the case that in the previous phase two jobs with load roughly $\xi^{-1} T$ have been placed on a given machine and in the current phase another job of size $T$ is placed.
If we have $\gamma < \xi^{-1}$, then it may happen that no job can be migrated and to guarantee a load of at most $2T$ we need to have $\xi \geq 2$ which implies a competitive ratio of at least $4$.
Hence, in order to do better, we consider $\gamma \geq \xi^{-1}$.

Next, consider the case that during the previous phase a machine received two jobs both with size roughly $\xi^{-1} T$ and in the current phase we want to place two new jobs of size roughly $\xi^{-1} \gamma^{-1}T$ (which can happen since $\gamma \geq \xi^{-1}$).
Then it may happen that none of the old jobs can be moved to the set of new jobs since $\gamma < 1$.
Furthermore, it is possible that only one of the old jobs can be migrated resulting in an overall load of roughly $2\xi^{-1} \gamma^{-1}T + \xi^{-1} T$.
Upper bounding this value by $2T$ yields $\xi \geq \gamma^{-1} + \frac{1}{2}$.
This situation is depicted in the second part of \cref{fig:load_structure} and turns out to be a critical example:
\begin{lemma}\label{lem:max_load_bounded_am_1}
The algorithmic framework with $\eta = 1$, $\xi = \gamma^{-1} +\frac{1}{2}$ and $\gamma \in (0,1)$ maintains $\load_i\leq (1+\eta)T$ for each machine $i\in\machs$ in the amortized setting.
\end{lemma}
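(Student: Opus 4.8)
The plan is to establish $\load_i\le 2T$ as an invariant maintained by every elementary step of the insertion procedure, simultaneously for all machines, and to argue by induction on these steps (nested inside an induction on phases). First I would observe that only an assignment to some $\newjobs_{i'}$ can ever raise a load: increasing the guess from $T$ to $\xi T$ leaves all loads unchanged while only relaxing the target (and reclassifying $\newjobs_i$ as $\oldjobs_i$ is load-neutral), moving a job from $\oldjobs_{i'}$ to $\newjobs_{i'}$ does not change $\load_{i'}$, and migrating a job away from $i'$ strictly decreases $\load_{i'}$. Hence it suffices to show that immediately after $j'$ is placed into $\newjobs_{i'}$ one still has $\load_{i'}\le 2T$.

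Fix a machine $i$ and a phase, let $\oldload_i^{0}$ be its old load at the start of the phase, and let $L_i$ denote the total load of the jobs that were taken from $Q$ and placed into $\newjobs_i$ during the phase. Conservation of job size together with the rule that a placed job $j'$ contributes exactly $\gamma p_{j'}$ to the potential yields the accounting identity
\[
\load_i \;=\; \oldload_i^{0}+(1-\gamma)\,L_i+\frac{\potential_i}{s_i},
\]
which I would verify step by step: placing $j'$ raises the right-hand side by $(1-\gamma)\tfrac{p_{j'}}{s_i}+\gamma\tfrac{p_{j'}}{s_i}=\tfrac{p_{j'}}{s_i}$, exactly matching the left-hand side, while a migration lowers both sides by the migrated load and a move into $\newjobs_i$ affects neither. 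By the induction hypothesis the total load at the end of the previous phase was at most twice the old guess; the doubling step multiplies the guess by $\xi$ and turns this load into old load, so $\oldload_i^{0}\le 2T/\xi$. It therefore remains to bound $(1-\gamma)L_i+\potential_i/s_i$ by $2T(1-1/\xi)$ at the instant right after a placement.

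At the moment $j'$ is assigned to $i'$ I would record the relevant structural facts. Eligibility with $\eta=1$ gives $p_{j'}/s_{i'}\le T$, and the machine was not saturated just before the assignment, so its new load (hence also the placed part $L_{i'}$) was strictly below $T$. Every old job of size at least $p_{j'}$ had been moved to $\newjobs_{i'}$; since this happens for each job processed on $i'$, every old job still present is smaller than every job placed on $i'$ during the phase, so its load is below $p_{j'}/s_{i'}$ and at most $\oldload_i^{0}\le 2T/\xi$. Finally, because $\oldjobs_{i'}$ is scanned greedily in non-increasing order, the leftover potential is strictly smaller than every surviving old job, i.e.\ $\potential_{i'}/s_{i'}$ lies below the smallest surviving old load; feeding this into the identity forces $\gamma L_{i'}<\oldload_i^{0}$ whenever any old job survives.

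The argument then splits according to whether migration empties $\oldjobs_{i'}$. If it does, then $\load_{i'}$ equals the new load, which is below $T+p_{j'}/s_{i'}\le 2T$. If some old job survives, I would combine the identity with the survivor bounds $\oldload_i^{0}\le 2T/\xi$, $\gamma L_{i'}<\oldload_i^{0}$, $\potential_{i'}\le\gamma L_{i'}s_{i'}$ and $\potential_{i'}/s_{i'}<p_{j'}/s_{i'}\le T$, together with the fact that each surviving old job is smaller than every placed job, and substitute $\xi=\gamma^{-1}+\tfrac12$. The main obstacle is precisely this surviving case: $\oldload_i^{0}$, $L_{i'}$ and $\potential_{i'}$ cannot all be large at once—a large placed load injects enough potential to migrate all old jobs, returning us to the first case, so a surviving old job caps $L_{i'}$, while the requirement that survivors be smaller than every placed job prevents a large leftover potential from coexisting with a large surviving old load. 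Turning this mutual exclusion into the clean inequality $\load_{i'}\le 2T$ is the delicate computation, and $\xi=\gamma^{-1}+\tfrac12$ is exactly the smallest value for which it closes, as witnessed by the critical instance of \cref{fig:load_structure} with two new jobs of size $\approx\xi^{-1}\gamma^{-1}T$ over one surviving old job of size $\approx\xi^{-1}T$.
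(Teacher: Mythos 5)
Your overall architecture — induction over phases giving $\startload_i \le 2\xi^{-1}T$, the conservation identity $\load_i = \startload_i + (1-\gamma)L_i + \potential_i/s_i$, the observation that the leftover potential is smaller than every surviving old job, and the case split on whether migration empties $\oldjobs_{i'}$ — is exactly the skeleton of the paper's proof. But there is a genuine gap at the decisive step, and it is not merely the computation you defer as ``delicate'': the facts you record are provably insufficient to yield $\load_i\le 2T$. You bound each surviving old job only by $p_{j'}/s_{i'}\le T$ and by $\startload_i\le 2\xi^{-1}T$. The fact you never use — and the one the paper's argument hinges on — is that every old job was itself assigned in some \emph{earlier} phase, when the guess was at most $\xi^{-1}T$, and was $\eta$-eligible at that moment; hence every $j\in\oldjobs_i$ satisfies $p_j/s_i\le \eta\xi^{-1}T=\xi^{-1}T$. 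Combined with ``leftover potential $<$ smallest survivor'' this caps the potential by $\xi^{-1}T$, and only then does the arithmetic close: $\load_i\le\gamma^{-1}\startload_i+\potential_i/s_i\le(2\gamma^{-1}+1)\xi^{-1}T=2T$. Your facts cap the potential only by $\min\set{T,\,2\xi^{-1}T,\,\gamma L_i}$, which is strictly weaker than $\xi^{-1}T$.

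That this weakening is fatal is witnessed by a single-phase configuration satisfying every inequality you list whose load exceeds $2T$. Take $\gamma=\nicefrac{9}{10}$, so $\xi=\nicefrac{29}{18}$ and $2\xi^{-1}T=\tfrac{36}{29}T\approx 1.241T$. Let the machine start the phase with two old jobs of loads $0.688T$ and $0.553T$ (total $\approx 1.241T$, so your induction hypothesis holds), and let two jobs of load $0.689T$ each be placed during the phase: both are $\eta$-eligible, the machine is unsaturated before each placement, and no job is shifted since $0.688T<0.689T$. Greedy migration removes only the small old job: after the first placement the potential $0.9\cdot 0.689T\approx 0.620T$ does not cover $0.688T$ but covers $0.553T$; after the second placement the accumulated potential is $\approx 0.687T<0.688T$, so the big old job survives. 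The final load is $2\cdot 0.689T+0.688T\approx 2.066T>2T$, yet all of your recorded facts hold: the survivor is smaller than every placed job and than $T$ and than $\startload_i$, the leftover potential is smaller than the survivor and at most $\gamma L_i$, and $\gamma L_i<\startload_i$. The only reason this cannot actually occur is the missing cross-phase fact: an old job of load $0.688T$ could not have been placed in the previous phase, since $0.688T>\xi^{-1}T\approx 0.621T$ violates eligibility with respect to the old guess. Once you add this per-job bound on old jobs (and hence $\potential_i/s_i\le\xi^{-1}T$), your identity-based argument closes and becomes precisely the paper's proof.
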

\begin{proof}
We denote the old load at the beginning of the current phase with $\startload_i$, i.e., the load directly after $T$ was increased or initially defined and all assigned jobs are old.
Furthermore, the load that has been shifted from $\oldjobs_i$ to $\newjobs_i$ during the phase is denoted as $\shiftload_i$.
After a new job was assigned to $i$, we have:
\begin{equation}\label{eq:newload_bounded}
\newload_i\leq (1+\eta) T
\end{equation}
Otherwise we would not have assigned the job.
In the first phase we have $\oldload_i = 0$ and jobs enter $\newjobs_i$ only when a job is newly assigned to $i$ yielding $\load_i = \newload_i + \oldload_i = \newload_i\leq (1+\eta) T$ during the first phase. 
Hence, we can inductively assume:
\begin{equation}\label{eq:oldload}
\oldload_i \leq \startload_i \leq (1+\eta)\xi^{-1}T
\end{equation}
If no new job is assigned to $i$ in the current phase, the claim is trivially true.
Hence, assume that a new job $j'$ has just been assigned to $i$ in the current phase.
Now, if $\oldload_i=0$, we again have $\load_i = \newload_i\leq (1+\eta) T$ due to (\ref{eq:newload_bounded}) and therefore we additionally assume $\oldload_i>0$.

Note that the overall migration potential accumulated in the current phase on machine $i$ equals $\gamma(\newload_i - \shiftload_i)$, while $\potential_i$ denotes the migration potential that has not been used.
We have:
\begin{equation*}
\oldload_i = \startload_i - \shiftload_i - (\gamma(\newload_i - \shiftload_i) - \potential_i) \leq \startload_i - (\gamma \newload_i - \potential_i) 
\end{equation*}
Since $\oldload_i>0$ and $p_j/s_i\leq\xi^{-1}\eta T = \xi^{-1} T$ for $j\in\oldjobs_i$ we additionally get 
$\potential_i \leq \min\set{\xi^{-1}T,\oldload_i}$ 
which in turn yields $\gamma \newload_i \leq \startload_i$ together with the prior equation.
Combining these observations with $\xi = \gamma^{-1} + \frac{1}{2}$ and \cref{eq:oldload}, we have:
\begin{align*}
\load_i &= \newload_i + \oldload_i 
\leq \newload_i + \startload_i - (\gamma \newload_i - \potential_i)
\leq (1 - \gamma)\gamma^{-1}\startload_i + \startload_i + \potential_i
= \gamma^{-1}\startload_i + \potential_i\\
& \leq (2\xi^{-1}\gamma^{-1} + \xi^{-1})T
= \frac{2\gamma^{-1} + 1}{\gamma^{-1}+\nicefrac{1}{2}}T 
= 2T
=(1+\eta)T
\end{align*}
\end{proof}
Hence, for the parameter choice stated in the above lemma we can guarantee:
\[\maxload\leq 2 T < 2 \xi \Opt = (2\gamma^{-1} +1)\Opt\]
Note that $(2\gamma^{-1} +1)\rightarrow 3$ for $\gamma\rightarrow 1$.
Now, if we set  $\eps = 2\gamma^{-1} - 2$ we get a competitive ratio of $3+\eps$ with a migration factor of
$\frac{1}{1-\gamma} = \frac{\gamma}{1-\gamma} + 1 = \frac{1}{\gamma^{-1} - 1} + 1 = 2\eps^{-1} + 1 = \Oh(\eps^{-1})$ (see \cref{rem:migration_factor}).
Hence, we can state the first result:
\begin{theorem}\label{thm:result1}
There is an algorithm for \problem with competitive ratio $3+\eps$, amortized migration factor $2/\eps + 1$ and polynomial running time for each $\eps >0$.
\end{theorem}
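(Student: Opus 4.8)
The plan is to assemble the theorem directly from the three ingredients already established: the load bound of \cref{lem:max_load_bounded_am_1}, the guess-correctness bound of \cref{lem:doubling_correct}, and the migration bound of \cref{rem:migration_factor}. The only genuinely new work is to translate the free parameter $\gamma$ into the target accuracy $\eps$ and to check that the resulting parameter choice is admissible.

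First I would fix the instantiation $\eta = 1$, $\xi = \gamma^{-1} + \tfrac{1}{2}$, and choose $\gamma$ so that the competitive ratio reads as $3+\eps$. Since \cref{lem:max_load_bounded_am_1} gives $\maxload \leq (1+\eta)T = 2T$ at every point in time and \cref{lem:doubling_correct} gives $T < \xi\,\Opt(I^*)$, combining the two yields $\maxload < 2\xi\,\Opt(I^*) = (2\gamma^{-1}+1)\Opt(I^*)$ after each insertion, hence for the whole instance. Setting $\eps = 2\gamma^{-1} - 2$, equivalently $\gamma = 2/(\eps+2)$, turns the factor $2\gamma^{-1}+1$ into exactly $3+\eps$. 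Here I would verify admissibility: $\gamma = 2/(\eps+2)$ lies in $(0,1)$ precisely when $\eps > 0$, matching the claimed parameter range, while $\xi = \gamma^{-1} + \tfrac{1}{2} > 1$ and $\eta = 1 \geq 1$ hold, so both \cref{lem:max_load_bounded_am_1} and \cref{lem:doubling_correct} indeed apply.

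Next I would bound the migration factor. By \cref{rem:migration_factor} it is at most $1/(1-\gamma)$, and substituting $\gamma = 2/(\eps+2)$ gives $\gamma^{-1} - 1 = \eps/2$, so that
\[
\frac{1}{1-\gamma} = \frac{1}{\gamma^{-1}-1} + 1 = \frac{2}{\eps} + 1 = \Oh(1/\eps).
\]
The polynomial running time is immediate from the earlier termination remark, which already bounds the number of guess increases and queue insertions incurred per arriving job.

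Since every step is a direct substitution into a previously proven statement, I do not anticipate a real obstacle here; the load bound of \cref{lem:max_load_bounded_am_1} does all the heavy lifting. The one point requiring care is purely bookkeeping: ensuring that the competitive bound $\maxload < (3+\eps)\Opt(I^*)$ is read as an invariant holding after every job arrival rather than merely at termination. This is exactly how both invoked lemmas are phrased, so the reduction from the per-prefix statement to a guarantee for the full input instance comes for free.
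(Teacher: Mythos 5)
Your proposal is correct and matches the paper's own proof essentially step for step: the paper likewise combines \cref{lem:max_load_bounded_am_1} (with $\eta=1$, $\xi=\gamma^{-1}+\nicefrac{1}{2}$) with \cref{lem:doubling_correct} to get $\maxload \leq 2T < (2\gamma^{-1}+1)\Opt$, sets $\eps = 2\gamma^{-1}-2$, and bounds the migration factor via \cref{rem:migration_factor} as $\frac{1}{1-\gamma} = \frac{1}{\gamma^{-1}-1}+1 = 2\eps^{-1}+1$. Your added admissibility check ($\gamma = 2/(\eps+2) \in (0,1)$ for all $\eps>0$) is a harmless, slightly more explicit version of what the paper leaves implicit.
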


\subsection{Non-amortized Approach}

The goal of this section is to transfer the previous result to the setting with non-amortized migration.
We start with a brief discussion of the parameter $\eta$. 
In the non-amortized case with $\eta = 1$, it may happen that all of the migration potential of a newly inserted job goes to waste and additionally none of the old jobs can be moved to the new ones.
For instance, consider the case that the set of old jobs for a given machine contains exactly two jobs of the same size placed in the previous round and in the current round three jobs of roughly the same size but slightly bigger arrive such that the corresponding migration potential does not suffice to move the old jobs.
For $2 \leq \xi < 2.5$ this may happen with jobs of size roughly $\xi^{-1}T$ and for $\xi <2$ with jobs of size roughly $0.5T$. 
In both cases, the load on the machine cannot be bounded by $2T$.

To address this problem the parameter $\eta$ was introduced:
If we choose $\eta \geq \gamma^{-1}$, we can guarantee that at least some, namely, at least half, of the migration potential can be used (as long as old jobs remain). 
This is easy to see, since if we insert a job $j'$ on a machine in this setting, all the remaining old jobs have size at most $\eta^{-1} p_j \leq \gamma p_j$ since this is guaranteed by the algorithm.
However, we still cannot guarantee that more that half of the migration potential is used.
This suggests a choice of $\xi \geq 2\gamma^{-1}$ (see also the third part of \cref{fig:load_structure}).
Hence, we show:
\begin{lemma}\label{lem:max_load_bounded_nonam}
The algorithmic framework with $\eta = \gamma^{-1}$, $\xi = 2\gamma^{-1}$ and $\gamma \in (0,1)$ maintains $\load_i\leq (1+\eta)T$ for each machine $i\in\machs$ in the non-amortized setting.
\end{lemma}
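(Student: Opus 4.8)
The plan is to follow the scaffolding of the proof of \cref{lem:max_load_bounded_am_1}, reusing the quantities $\startload_i$ (old load at the start of the phase) and $\shiftload_i$ (load moved from $\oldjobs_i$ to $\newjobs_i$ during the phase), together with the two bounds that do not depend on whether migration is amortized: a freshly assigned job never pushes the new load above the threshold, so $\newload_i\leq(1+\eta)T$; and, inductively over the phases, $\oldload_i\leq\startload_i\leq(1+\eta)\xi^{-1}T$, since the old load only shrinks within a phase and at the start of a phase equals the final load of the previous phase, which was at most $(1+\eta)T/\xi$. The cases $\oldload_i=0$, the first phase, and ``no new job assigned in the current phase'' are immediate, so I would assume a new job $j'$ has just been assigned to $i$ with $\oldload_i>0$. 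The only genuine change from the amortized setting is that there is no carried-over potential $\potential_i$ to exploit, so the potential bookkeeping must be replaced by a direct per-insertion estimate of how much old load each arriving job removes from $i$.

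The heart of the argument --- and the step I expect to be the main obstacle --- is the following claim: whenever $j'$ is assigned to $i$ and at least one old job survives the migration, the migration triggered by $j'$ removes old load strictly more than $\tfrac12\gamma p_{j'}$, i.e.\ at least half of the potential $\gamma p_{j'}$ is consumed. This is exactly where $\eta=\gamma^{-1}$ enters: after the move-to-new step every remaining old job has size below $\eta^{-1}p_{j'}=\gamma p_{j'}$, which is the initial potential, so the largest old job is always migrated. Hence, if old jobs remain, I would take $j^*$ to be the first surviving old job in the non-increasing processing order; all jobs preceding it were migrated, their total size $L$ satisfies $L\geq p_{j^*}$ (each is at least $p_{j^*}$, and there is at least one since the largest old job is migrated), and the skip of $j^*$ forces $p_{j^*}>\gamma p_{j'}-L$. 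Combining the two gives $2L>\gamma p_{j'}$, so the migrated load is at least $L>\tfrac12\gamma p_{j'}$.

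Next I would aggregate this over the whole phase. Because $\oldjobs_i$ only shrinks within a phase, the assumption that old jobs remain at the end implies old jobs were present right after every insertion on $i$; the claim therefore applies to each newly assigned job, and since the sizes of these jobs sum to $s_i(\newload_i-\shiftload_i)$, the total old load migrated away from $i$ exceeds $\tfrac{\gamma}{2}(\newload_i-\shiftload_i)$ in load units. Plugging this into the identity $\oldload_i=\startload_i-\shiftload_i-(\text{migrated load})$ and dropping the nonnegative term $(1-\tfrac{\gamma}{2})\shiftload_i$ yields $\oldload_i<\startload_i-\tfrac{\gamma}{2}\newload_i$.

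It then remains to combine the pieces. Using this estimate, the bound $\newload_i\leq(1+\eta)T$ (whose coefficient $1-\tfrac{\gamma}{2}$ is positive), and $\startload_i\leq(1+\eta)\xi^{-1}T=\tfrac{1+\gamma}{2}T$, I would compute
\[
\load_i=\newload_i+\oldload_i<\bigl(1-\tfrac{\gamma}{2}\bigr)(1+\gamma^{-1})T+\tfrac{1+\gamma}{2}T=(1+\gamma^{-1})T=(1+\eta)T,
\]
where the parameter choices $\eta=\gamma^{-1}$ and $\xi=2\gamma^{-1}$ make the two coefficients collapse to $1+\gamma^{-1}$. Everything except the half-potential claim is a routine recombination of the same quantities used in \cref{lem:max_load_bounded_am_1}.
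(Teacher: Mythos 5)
Your proof is correct and takes essentially the same route as the paper's: the same reductions and inductive bounds ($\newload_i\leq(1+\eta)T$ and $\oldload_i\leq\startload_i\leq(1+\eta)\xi^{-1}T$), the same key claim that each insertion leaving old jobs behind migrates old load of total size more than $\tfrac{1}{2}\gamma p_{j'}$, the same aggregation to $\oldload_i\leq\startload_i-\tfrac{1}{2}\gamma\newload_i$, and the same final computation. Your detailed argument for the half-potential claim (largest surviving job versus the migrated prefix in the non-increasing order) is exactly the justification the paper compresses into one sentence about removing jobs in non-increasing order.
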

\begin{proof}
We follow the same basic approach as in \cref{lem:max_load_bounded_am_1}, i.e., the old load at the beginning of the current phase is denoted as $\startload_i$ and the load that has been shifted from $\oldjobs_i$ to $\newjobs_i$ during the phase as $\shiftload_i$.
Moreover, we have $\newload_i\leq (1+\eta) T$ (\cref{eq:newload_bounded}) after a new job was assigned to machine $i$, we can inductively assume $\oldload_i \leq \startload_i \leq (1+\eta)\xi^{-1}T$ (\cref{eq:oldload}), and consider the critical case that a new job $j'$ has just been assigned to $i$ in the current phase with some old jobs remaining on the machine.

Remember that whenever we start removing jobs from a machine $i$ due to the assignment of a job $j$, all the jobs from $\oldjobs_i$ have size at most $\gamma p_j$, because otherwise they would have been shifted to $\newjobs_i$.
Since we remove the jobs in non-increasing order by size, either all jobs from $\oldjobs_i$ will be removed, or jobs with summed up size at least $0.5p_j$.
In the considered case we have $\oldload_i > 0$ and therefore get:
\begin{equation*}
\oldload_i \leq \startload_i - \shiftload_i - \frac{1}{2}\gamma(\newload_i - \shiftload_i) \leq \startload_i - \frac{1}{2}\gamma\newload_i
\end{equation*}
This together with $\xi = 2\gamma^{-1}$ and $\eta=\gamma^{-1}$ yields:
\begin{align*}
\load_i & = 
\newload_i + \oldload_i 
\leq \newload_i + \startload_i - \frac{1}{2}\gamma\newload_i
= \startload_i + (1-\frac{1}{2}\gamma)\newload_i\\
& \leq ((1+\gamma^{-1})\xi^{-1} + (1-\frac{1}{2}\gamma)(1+\gamma^{-1}))T
= (1+\eta)T
\end{align*}
\end{proof}
Hence, for the parameter choice stated in the above lemma we can guarantee:
\[\maxload\leq (1+\gamma^{-1}) T < (1+\gamma^{-1}) \xi \Opt = 2(1+\gamma^{-1})\gamma^{-1}\Opt\]
Note that $2(1+\gamma^{-1})\gamma^{-1}\rightarrow 4$ for $\gamma\rightarrow 1$.
Let $\eps = 2(1+\gamma^{-1})\gamma^{-1} - 4$ yielding $2\eps^{-1} = \frac{\gamma}{1 + \gamma^{-1} -2\gamma }$ and note that $1 + \gamma^{-1} -2\gamma = 2( 1- \gamma ) + \gamma^{-1}(1-\gamma) \leq 4( 1- \gamma )$ for $\gamma\in[1/2,1)$.
Hence, we have a competitive ratio of $2(1+\gamma^{-1})\gamma^{-1} = 4 + \eps$ with a migration factor of $\frac{1}{1-\gamma} = \frac{4\gamma}{4(1-\gamma)} + 1 \leq \frac{4\gamma}{1 + \gamma^{-1} -2\gamma } + 1 = 8\eps^{-1} + 1 = \Oh(\eps^{-1})$.
Therefore, we can state the second result:
\begin{theorem}\label{thm:result2}
There is an algorithm for \problem with competitive ratio $4+\eps$, non-amortized migration factor $8\eps^{-1} + 1$ and polynomial running time for each $\eps \in (0,8]$.
\end{theorem}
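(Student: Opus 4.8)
The plan is to instantiate the framework with $\eta = \gamma^{-1}$ and $\xi = 2\gamma^{-1}$ for a suitable $\gamma$ and to assemble the already-established building blocks into the three claimed guarantees; the real work has in fact been done in \cref{lem:max_load_bounded_nonam}, so what remains is careful bookkeeping. First I would pin down the correspondence between the parameter $\gamma$ and the target accuracy $\eps$ by setting $\eps = 2(1+\gamma^{-1})\gamma^{-1} - 4$. Writing this out as $\eps = 2\gamma^{-1} + 2\gamma^{-2} - 4$ shows it is strictly decreasing in $\gamma$, and evaluating the endpoints ($\gamma = 1/2$ gives $\eps = 8$, while $\gamma \to 1^-$ gives $\eps \to 0^+$) shows that $\gamma \mapsto \eps$ is a bijection from $[1/2,1)$ onto $(0,8]$. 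Thus every admissible $\eps$ is realized by a unique admissible $\gamma$, which is precisely why the statement is phrased for $\eps \in (0,8]$.

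The competitive ratio then follows by chaining two facts. \cref{lem:max_load_bounded_nonam} bounds the load on every machine by $(1+\eta)T = (1+\gamma^{-1})T$ at all times, and \cref{lem:doubling_correct} guarantees $\xi^{-1}T < \Opt(I^*)$, i.e.\ $T < 2\gamma^{-1}\Opt$. Multiplying these gives $\maxload < 2(1+\gamma^{-1})\gamma^{-1}\Opt = (4+\eps)\Opt$ by the definition of $\eps$, which is the desired ratio. Polynomial running time is inherited directly from the corresponding Remark, so no separate argument is needed there.

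The one genuinely delicate point---and where I expect the (otherwise routine) obstacle to sit---is the migration factor, since this is exactly where the cap $\eps \le 8$ becomes necessary. By \cref{rem:migration_factor} the non-amortized migration factor equals $1/(1-\gamma)$, so I must verify $1/(1-\gamma) \le 8\eps^{-1}+1$. I would rewrite $1/(1-\gamma) = \gamma/(1-\gamma) + 1$ and use the identity $2\eps^{-1} = \gamma/(1+\gamma^{-1}-2\gamma)$ (which is just the definition of $\eps$ rearranged), reducing the claim to the elementary inequality $1+\gamma^{-1}-2\gamma \le 4(1-\gamma)$. Factoring the left-hand side as $(2+\gamma^{-1})(1-\gamma)$, this inequality is equivalent to $2+\gamma^{-1} \le 4$, i.e.\ to $\gamma \ge 1/2$---exactly the lower bound on $\gamma$ fixed above, and hence exactly the upper bound $\eps \le 8$. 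Once this inequality is in place, the chain $1/(1-\gamma) \le 8\eps^{-1}+1 = \Oh(\eps^{-1})$ closes, and the three guarantees together prove the theorem.
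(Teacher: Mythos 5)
Your proposal is correct and follows essentially the same route as the paper: the same instantiation $\eta=\gamma^{-1}$, $\xi=2\gamma^{-1}$, the same definition $\eps = 2(1+\gamma^{-1})\gamma^{-1}-4$, the competitive ratio via \cref{lem:max_load_bounded_nonam} combined with \cref{lem:doubling_correct}, and the migration bound reduced to the inequality $1+\gamma^{-1}-2\gamma \le 4(1-\gamma)$ for $\gamma\in[1/2,1)$, which the paper writes as $2(1-\gamma)+\gamma^{-1}(1-\gamma)\le 4(1-\gamma)$ and you equivalently factor as $(2+\gamma^{-1})(1-\gamma)$. Your explicit remark that $\gamma\mapsto\eps$ is a decreasing bijection from $[1/2,1)$ onto $(0,8]$ is a nice clarification of why the theorem is stated for $\eps\in(0,8]$, which the paper leaves implicit.
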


\subsection{Second Amortized Approach}

We carry the idea of using a value of $\eta > 1$ back to the non-amortized case.
In the critical example for the first non-amortized approach, there were two jobs from the previous phase, two jobs in the current phase, all of them rather big, and only one of the old jobs could be migrated.
Now a choice of $\eta \geq \gamma^{-1}$ guarantees that each newly inserted job can migrate at least one job (if old jobs remain).
Hence, we can avoid this example.
However, it is still not immediately clear that a choice of $\eta > 1$ is useful and other problematic cases emerge:
Say there were three rather big jobs in the previous phase and only two can be migrated or there is only one particularly big job in the new phase and only half of its migration potential can be used.
Both of these cases are dealt with in the following:
\begin{lemma}\label{lem:max_load_bounded_am_2}
The algorithmic framework with $\eta = \gamma^{-1}$, $\xi = \gamma^{-1} + \frac{1}{3}$ and $\gamma \in (0,1)$ maintains $\load_i\leq (1+\eta)T$ for each machine $i\in\machs$ in the amortized setting.
\end{lemma}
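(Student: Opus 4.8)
The plan is to mirror the proofs of \cref{lem:max_load_bounded_am_1} and \cref{lem:max_load_bounded_nonam}. I again write $\startload_i$ for the old load at the start of the current phase and $\shiftload_i$ for the load moved from $\oldjobs_i$ to $\newjobs_i$ during the phase, I use $\newload_i\leq(1+\eta)T$ right after an assignment (\cref{eq:newload_bounded}), and I assume inductively that $\oldload_i\leq\startload_i\leq(1+\eta)\xi^{-1}T$ (\cref{eq:oldload}). As before it suffices to treat the critical situation in which a new job $j'$ has just been placed on $i$ and old jobs remain, i.e.\ $\oldload_i>0$. The amortized bookkeeping from \cref{lem:max_load_bounded_am_1} gives the identity $\load_i=\startload_i+(1-\gamma)(\newload_i-\shiftload_i)+\potential_i$, so the whole argument reduces to controlling the unused potential $\potential_i$.

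I would collect the bounds on $\potential_i$ that are now available. As in \cref{lem:max_load_bounded_am_1}, the unused potential is smaller than the largest remaining old job and hence $\potential_i\leq\oldload_i$, which fed back into the identity yields $\gamma\newload_i\leq\startload_i$. Because $\eta=\gamma^{-1}$, every old job surviving the insertion of $j'$ has load at most $\gamma(p_{j'}/s_i)$ (otherwise it would have been shifted), so -- exactly as in \cref{lem:max_load_bounded_nonam} and since jobs are migrated in non-increasing order -- at least half of the potential available at each insertion is spent while old jobs remain. Together with the amortized carry-over this controls $\potential_i$ both through (essentially) half of the potential of the last inserted job, $\tfrac12\gamma(p_{j'}/s_i)$, and through the largest remaining old job $\omega$, with $\omega\leq\gamma(p_{j'}/s_i)$ and $\omega\leq(1+\eta)\xi^{-1}T$. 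A quick computation shows that no single one of these bounds, combined with the extreme values of $\startload_i$ and $\newload_i$, suffices for $\xi=\gamma^{-1}+\tfrac13$, so they must be balanced, and I would split along the two critical examples accordingly.

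In the \enquote{one big new job} case the waste is governed by the last job: $\potential_i<\tfrac12\gamma(p_{j'}/s_i)$ with $p_{j'}/s_i\leq\eta T=\gamma^{-1}T$, whence $\oldload_i<\startload_i-\tfrac12\gamma(p_{j'}/s_i)$ and $\load_i<\startload_i+(1-\tfrac12\gamma)(p_{j'}/s_i)\leq(1+\eta)\xi^{-1}T+(\gamma^{-1}-\tfrac12)T$; this equals the target $(1+\eta)T$ exactly when $(1+\gamma^{-1})\xi^{-1}=\tfrac32$, i.e.\ $\xi=\gamma^{-1}+\tfrac13$, and stays below it for $\gamma<1$. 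In the \enquote{several old jobs} case -- modelled by three old jobs of load $\omega$ of which the potential removes only two -- I use $\potential_i<\omega$, so that the generated potential satisfies $\gamma\newload_i<2\omega+\omega$ and thus $\newload_i<3\gamma^{-1}\omega$, giving $\load_i<3\gamma^{-1}\omega+\omega=(3\gamma^{-1}+1)\omega=3\xi\,\omega$; since $\startload_i\approx3\omega\leq(1+\eta)\xi^{-1}T$ this is again at most $(1+\eta)T$, which is precisely why $\xi=\gamma^{-1}+\tfrac13$ (so that $3\xi=3\gamma^{-1}+1$) is the right choice. Both bounds meet $(1+\eta)T$ in the limit $\gamma\to1$, confirming tightness. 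Feeding $\load_i\leq(1+\eta)T$ into \cref{lem:doubling_correct} then gives the competitive ratio $(1+\eta)\xi=(1+\gamma^{-1})(\gamma^{-1}+\tfrac13)$, which tends to $8/3$ as $\gamma\to1$, while the migration factor follows as before from \cref{rem:migration_factor}.

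The step I expect to be the main obstacle is making this case distinction both exhaustive and rigorous in the presence of the amortized carry-over. Treating $\startload_i$, $\newload_i$ and $\potential_i$ as independent and substituting their extreme values admits configurations -- for instance a single maximal surviving old job together with near-maximal new load and near-maximal wasted potential -- that would violate $(1+\eta)T$; such configurations are infeasible only because a surviving old job must have escaped \emph{both} shifting and migration at \emph{every} insertion, which tightly couples how much new load and how much unused potential can coexist with it. Capturing this coupling, rather than bounding the three quantities separately, is the heart of the proof, and it is exactly where the value $\gamma^{-1}+\tfrac13$ of the doubling factor is pinned down -- the $\tfrac13$ being smaller than the $\tfrac12$ of \cref{lem:max_load_bounded_am_1}, reflecting the extra efficiency gained from choosing $\eta=\gamma^{-1}$.
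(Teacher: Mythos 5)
Your first case (\enquote{one big new job}) is essentially the paper's first case ($|\newjobs_i|=1$) and is sound: with $\eta=\gamma^{-1}$, at least half of the fresh potential is spent whenever old jobs survive, and $\newload_i\leq\eta T$ then gives $\load_i\leq((1+\eta)\xi^{-1}+\eta-\frac{1}{2})T\leq(1+\eta)T$. (Two small blemishes: equality with the target corresponds to $\xi=\frac{2}{3}(1+\gamma^{-1})$, not to $\xi=\gamma^{-1}+\frac{1}{3}$ -- what you actually need is $(1+\gamma^{-1})(\gamma^{-1}+\frac{1}{3})^{-1}\leq\frac{3}{2}$, which holds because $\gamma^{-1}\geq 1$; and the sub-case where the single new job was merely shifted from $\oldjobs_i$ should be mentioned, though it is trivial since then $\load_i=\startload_i$.)

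The genuine gap is your second case. There you only verify the bound on one hand-picked configuration (three equal old jobs of load $\omega$, two of which get migrated); this is a sanity check of the critical example, not an argument covering all configurations, and you say so yourself. The coupling you identify as \enquote{the heart of the proof} but do not supply is exactly what the paper proves for the exhaustive complementary case $|\newjobs_i|\geq 2$: every event that adds a job to $\newjobs_i$ also removes the currently biggest job from $\oldjobs_i$ -- shifting removes a size-prefix of the old jobs, and during migration the biggest surviving old job always fits into the fresh potential since its size is at most $\eta^{-1}p_{j'}=\gamma p_{j'}$. Hence, as soon as $\newjobs_i$ contains two jobs, the two biggest old jobs of the phase are gone, so every surviving old job -- and therefore the unused potential $\potential_i$, which is smaller than the smallest surviving old job -- is at most $\startload_i/3\leq(1+\eta)\xi^{-1}T/3$. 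Feeding this into the inequality you already derived ($\load_i\leq\gamma^{-1}\startload_i+\potential_i$, via $\gamma\newload_i\leq\startload_i$) gives $\load_i\leq\bigl(\gamma^{-1}(1+\eta)+(1+\eta)/3\bigr)\xi^{-1}T=(1+\eta)(\eta+\frac{1}{3})\xi^{-1}T=(1+\eta)T$, uniformly in the number and sizes of old and new jobs. Note that the correct dichotomy is $|\newjobs_i|=1$ versus $|\newjobs_i|\geq 2$ (your split along two critical examples does not partition all configurations), and that this bound on $\potential_i$ in terms of $\startload_i$ -- rather than in terms of the last job's potential or the generic $\xi^{-1}T$ of \cref{lem:max_load_bounded_am_1} -- is precisely where the constant $\frac{1}{3}$ in $\xi=\gamma^{-1}+\frac{1}{3}$ is pinned down, as your example correctly suggested.
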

\begin{proof}
Like before, we denote the old load at the beginning of the current phase with $\startload_i$ and the load that has been shifted from $\oldjobs_i$ to $\newjobs_i$ during the phase as $\shiftload_i$.
Moreover, we can focus on the case that there is some old load remaining on a given machine $i$ and assume \cref{eq:newload_bounded,eq:oldload} using the same arguments as in \cref{lem:max_load_bounded_am_1}.
Note that \cref{eq:oldload} implies the claim if $\newload_i = 0$ and in the following we consider the two cases that there is exactly one or more than one new job.

\proofsubparagraph{Exactly One New Job.}

If $|\newjobs_{i}| = 1$, either exactly one job was shifted from the old to the new jobs or exactly one new job was inserted after some migration and no job was shifted.
Since the first case is trivial, we focus on the second and essentially use the same argument as in the non-amortized case (see \cref{lem:max_load_bounded_nonam}).
Due to the choice of $\eta$, we can guarantee that at least half of the migration potential of the inserted job was used and therefore have:
\begin{equation*}
\oldload_i \leq \startload_i - \frac{1}{2}\gamma\newload_i 
\end{equation*}
Since $|\newjobs_{i}|=1$, we additionally have $\newload_i \leq \eta T$ yielding:
\begin{align*}
\load_i & = 
\newload_i + \oldload_i 
\leq \newload_i + \startload_i - \frac{1}{2}\gamma\newload_i
= \startload_i + (1-\frac{1}{2}\gamma)\newload_i\\
& \leq ((1+\eta)\xi^{-1} + (1-\frac{1}{2}\gamma)\eta)T
= ((\eta + 1)(\eta + \nicefrac{1}{3})^{-1} + \eta - \nicefrac{1}{2})T\\
&< ((\eta + 1)(\nicefrac{2}{3}\cdot\eta + \nicefrac{2}{3})^{-1} + \eta - \nicefrac{1}{2})T
= (1+\eta)T
\end{align*}

\proofsubparagraph{More Than One New Job.}

If more than one job was inserted into $\newjobs_{i}$, we mostly use the same argument as in the first amortized approach (see \cref{lem:max_load_bounded_am_1}).
In particular, if $\pi_i$ is the remaining unused migration potential, we again have:
\begin{equation*}
\oldload_i = \startload_i - \shiftload_i - (\gamma(\newload_i - \shiftload_i) - \potential_i) \leq \startload_i - (\gamma \newload_i - \potential_i) 
\end{equation*}
However, we can use the choice of $\eta$ to get a better bound for $\potential_i$.
To see this, note that whenever a job is considered for insertion on $i$, all old jobs that are too big to be migrated using only the potential from this job are shifted to the set of new jobs.
Hence, we know that since two jobs are included in $\newjobs_{i}$, at least the two biggest old jobs have been removed from the set of old jobs. 
This implies that the biggest remaining old job has a size of at most $\startload_i / 3 \leq (1 + \eta) \xi^{-1} T /3$ yielding $\potential_i \leq (1 + \eta) \xi^{-1} T /3$.  
Hence, $\xi = \nicefrac{1}{3} + \eta$ yields:
\begin{align*}
\load_i &= \newload_i + \oldload_i 
\leq \newload_i + \startload_i - (\gamma \newload_i - \potential_i)
\leq (1 - \gamma)\gamma^{-1}\startload_i + \startload_i + \potential_i
= \gamma^{-1}\startload_i + \potential_i\\
& \leq ((1+\eta)\eta + (1 + \eta)/3)\xi^{-1}T
= ((1+ \eta)(\nicefrac{1}{3} + \eta))\xi^{-1}T
= (1+\eta)T
\end{align*}
\end{proof}
Hence, the stated parameter choice yields:
\[\maxload\leq 2 T < 2 \xi \Opt = (2\gamma^{-1} + \nicefrac{2}{3})\Opt\]
We have $(2\gamma^{-1} + \nicefrac{2}{3})\rightarrow \nicefrac{8}{3}$ for $\gamma\rightarrow 1$.
Now, if we set  $\eps = 2\gamma^{-1} - 2$ we get a competitive ratio of $\nicefrac{8}{3}+\eps$ with a migration factor of
$\frac{1}{1-\gamma} = \frac{\gamma}{1-\gamma} + 1 = \frac{1}{\gamma^{-1} - 1} + 1 = 2\eps^{-1} + 1 = \Oh(\eps^{-1})$ (see \cref{rem:migration_factor}) yielding the last result:
\begin{theorem}\label{thm:result3}
There is an algorithm for \problem with competitive ratio $8/3+\eps$, amortized migration factor $2/\eps + 1$ and polynomial running time for each $\eps >0$..
\end{theorem}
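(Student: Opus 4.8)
The plan is to obtain \cref{thm:result3} by simply combining the three guarantees already established for the framework: the per-machine load bound of \cref{lem:max_load_bounded_am_2}, the doubling correctness of \cref{lem:doubling_correct}, and the migration-factor estimate of \cref{rem:migration_factor}. Concretely, I would run the framework in amortized mode with $\eta = \gamma^{-1}$ and $\xi = \gamma^{-1} + \nicefrac{1}{3}$ for a parameter $\gamma \in (0,1)$ fixed only at the very end, translate these guarantees into a competitive ratio and a migration factor expressed through $\gamma$, and finally re-parametrize everything in terms of $\eps$.

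First I would bound the competitive ratio. \cref{lem:max_load_bounded_am_2} yields $\load_i \leq (1+\eta)T$ on every machine throughout the execution, hence $\maxload \leq (1+\eta)T$; as $\eta = \gamma^{-1} \to 1$ for $\gamma \to 1$, this is the bound $\maxload \leq 2T$ in the regime of interest. \cref{lem:doubling_correct} guarantees $\xi^{-1}T < \Opt(I^*)$ at all times, i.e.\ $T < \xi\,\Opt(I^*)$, in particular right after the current job $j^*$ has been scheduled. Multiplying the two estimates gives $\maxload < (1+\eta)\xi\,\Opt(I^*)$, so the competitive ratio equals $(1+\eta)\xi$, which in the limit $\gamma \to 1$ becomes $2\gamma^{-1} + \nicefrac{2}{3} \to \nicefrac{8}{3}$.

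Next I would read off the migration factor and re-parametrize. Setting $\eps = 2\gamma^{-1} - 2$, equivalently $\gamma^{-1} = 1 + \eps/2$ so that $\gamma \to 1$ corresponds to $\eps \to 0$, turns the competitive ratio into $\nicefrac{8}{3} + \eps$. For the migration, \cref{rem:migration_factor} bounds the amortized migration factor by $1/(1-\gamma)$, and the identity $\frac{1}{1-\gamma} = \frac{1}{\gamma^{-1}-1} + 1 = \frac{2}{\eps} + 1$ rewrites this as the claimed $2/\eps + 1$. Polynomial running time is inherited from the termination remark for the insertion procedure, so assembling these facts yields the statement for every $\eps > 0$.

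Since the entire difficulty of the argument has already been absorbed into \cref{lem:max_load_bounded_am_2} --- in particular the case split between exactly one and more than one new job, together with the key observation that once two new jobs lie in $\newjobs_i$ the largest surviving old job has size at most $\startload_i/3$, which sharpens the unused potential to $\potential_i \leq (1+\eta)\xi^{-1}T/3$ --- the only work remaining for the theorem is the routine bookkeeping above. The one step I expect to require genuine care is reconciling the load bound $(1+\eta)T$, which for $\eta = \gamma^{-1} > 1$ strictly exceeds $2T$, with the limiting value $\nicefrac{8}{3}$: one must check that the chosen correspondence between $\eps$ and $\gamma$ really delivers competitive ratio $\nicefrac{8}{3}+\eps$ together with migration factor $2/\eps+1$ simultaneously, which is the only genuinely delicate point of the proof.
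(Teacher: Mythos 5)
Your proposal follows exactly the paper's own route: \cref{lem:max_load_bounded_am_2} for the per-machine load bound, \cref{lem:doubling_correct} for $T<\xi\,\Opt(I^*)$, \cref{rem:migration_factor} for the migration factor, and a final re-parametrization in $\eps$. Up to the inequality $\maxload<(1+\eta)\xi\,\Opt(I^*)$ your derivation is correct and matches the paper.

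However, the \enquote{delicate point} you flag at the end is a genuine gap, and it cannot be closed in the form you state it. With $\eta=\gamma^{-1}$ the lemma only gives $\maxload\leq(1+\gamma^{-1})T$, so the ratio produced by the argument is $(1+\gamma^{-1})(\gamma^{-1}+\tfrac{1}{3})$, not $2\gamma^{-1}+\tfrac{2}{3}$. Substituting $\eps=2\gamma^{-1}-2$ yields
\[
\Bigl(2+\tfrac{\eps}{2}\Bigr)\Bigl(\tfrac{4}{3}+\tfrac{\eps}{2}\Bigr)=\tfrac{8}{3}+\tfrac{5}{3}\eps+\tfrac{\eps^2}{4},
\]
which is strictly larger than $\tfrac{8}{3}+\eps$. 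Conversely, forcing the ratio to equal exactly $\tfrac{8}{3}+\eps$ requires $\gamma^{-1}-1=-\tfrac{5}{3}+\sqrt{\tfrac{25}{9}+\eps}\approx\tfrac{3}{10}\eps$ for small $\eps$, whence the migration factor is $\tfrac{1}{1-\gamma}=1+\tfrac{1}{\gamma^{-1}-1}\approx 1+\tfrac{10}{3}\eps^{-1}$, which exceeds the claimed $2/\eps+1$. So competitive ratio $\tfrac{8}{3}+\eps$ and migration factor $2/\eps+1$ cannot both be extracted from \cref{lem:max_load_bounded_am_2} by this bookkeeping; what the argument honestly delivers is $\tfrac{8}{3}+\eps$ with migration factor $\Oh(1/\eps)$. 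You should also know that the paper itself stumbles at precisely this spot: it asserts $\maxload\leq 2T$, which contradicts the $(1+\eta)T$ bound of its own lemma (since $\eta=\gamma^{-1}>1$), and then computes the ratio as $2\xi$. Your caution was therefore well founded, but the resolution is not a finer check of the correspondence between $\eps$ and $\gamma$ --- it is a correction: either rescale $\eps$ by a constant and state the migration factor as $\Oh(1/\eps)$, or keep $\eps=2\gamma^{-1}-2$ and accept the weaker ratio $\tfrac{8}{3}+\tfrac{5}{3}\eps+\tfrac{\eps^2}{4}$.
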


\section{Conclusion}

We conclude this work with a brief discussion of possible future research. 
First, it would be interesting to investigate whether variants of the present approach can beat the lower bound of $\approx 2.564$ (see \cite{DBLP:journals/mst/EbenlendrS15}) for the pure online variant.
However, to breach the barrier of $2$ it seams likely that different algorithmic approaches are needed. 
A good candidate for this could be the LPT rule for list scheduling which was successfully adapted for online machine covering on identical machines with bounded migration \cite{DBLP:journals/talg/GalvezSV20}. 
Of course, a PTAS result with constant migration for \problem seams especially worthwhile to pursue, in particular since this was achieved for the identical machine case \cite{DBLP:journals/mor/SandersSS09} and PTAS results for the offline setting are well-known \cite{DBLP:journals/siamcomp/HochbaumS88,DBLP:journals/siamdm/Jansen10}.
However, it seems highly non-trivial to adapt the techniques used for the identical machine setting to uniformly related machines except for special cases with, e.g., similar or few different machine speeds.
Lastly, it would be interesting to revisit the results by Westbrook \cite{DBLP:journals/jal/Westbrook00} and Andrews et al. \cite{DBLP:journals/algorithmica/AndrewsGZ99} trying to come up with improved competitive ratios for the dynamic case.



\bibliography{bibibib}

\begin{thebibliography}{10}

\bibitem{DBLP:journals/algorithmica/AndrewsGZ99}
Matthew Andrews, Michel~X. Goemans, and Lisa Zhang.
\newblock Improved bounds for on-line load balancing.
\newblock {\em Algorithmica}, 23(4):278--301, 1999.
\newblock \href {https://doi.org/10.1007/PL00009263}
  {\path{doi:10.1007/PL00009263}}.

\bibitem{DBLP:conf/stoc/AspnesAFPW93}
James Aspnes, Yossi Azar, Amos Fiat, Serge~A. Plotkin, and Orli Waarts.
\newblock On-line load balancing with applications to machine scheduling and
  virtual circuit routing.
\newblock In S.~Rao Kosaraju, David~S. Johnson, and Alok Aggarwal, editors,
  {\em Proceedings of the Twenty-Fifth Annual {ACM} Symposium on Theory of
  Computing, May 16-18, 1993, San Diego, CA, {USA}}, pages 623--631. {ACM},
  1993.
\newblock \href {https://doi.org/10.1145/167088.167248}
  {\path{doi:10.1145/167088.167248}}.

\bibitem{BarNoyFN00}
Amotz Bar-Noy, Ari Freund, and Joseph Naor.
\newblock New algorithms for related machines with temporary jobs.
\newblock {\em J. Sched.}, 3(5):259--272, 2000.
\newblock \href
  {https://doi.org/10.1002/1099-1425(200009/10)3:5<259::AID-JOS47>3.0.CO;2-3}
  {\path{doi:10.1002/1099-1425(200009/10)3:5<259::AID-JOS47>3.0.CO;2-3}}.

\bibitem{DBLP:conf/waoa/BerndtDGJK19}
Sebastian Berndt, Valentin Dreismann, Kilian Grage, Klaus Jansen, and Ingmar
  Knof.
\newblock Robust online algorithms for certain dynamic packing problems.
\newblock In Evripidis Bampis and Nicole Megow, editors, {\em Approximation and
  Online Algorithms - 17th International Workshop, {WAOA} 2019, Munich,
  Germany, September 12-13, 2019, Revised Selected Papers}, volume 11926 of
  {\em Lecture Notes in Computer Science}, pages 43--59. Springer, 2019.
\newblock \href {https://doi.org/10.1007/978-3-030-39479-0\_4}
  {\path{doi:10.1007/978-3-030-39479-0\_4}}.

\bibitem{DBLP:journals/orl/BerndtEM22}
Sebastian Berndt, Franziska Eberle, and Nicole Megow.
\newblock Online load balancing with general reassignment cost.
\newblock {\em Oper. Res. Lett.}, 50(3):322--328, 2022.
\newblock \href {https://doi.org/10.1016/j.orl.2022.03.007}
  {\path{doi:10.1016/j.orl.2022.03.007}}.

\bibitem{DBLP:journals/mp/BerndtJK20}
Sebastian Berndt, Klaus Jansen, and Kim{-}Manuel Klein.
\newblock Fully dynamic bin packing revisited.
\newblock {\em Math. Program.}, 179(1):109--155, 2020.
\newblock \href {https://doi.org/10.1007/s10107-018-1325-x}
  {\path{doi:10.1007/s10107-018-1325-x}}.

\bibitem{DBLP:journals/mst/EbenlendrS15}
Tom{\'{a}}s Ebenlendr and Jir{\'{\i}} Sgall.
\newblock A lower bound on deterministic online algorithms for scheduling on
  related machines without preemption.
\newblock {\em Theory Comput. Syst.}, 56(1):73--81, 2015.
\newblock \href {https://doi.org/10.1007/s00224-013-9451-6}
  {\path{doi:10.1007/s00224-013-9451-6}}.

\bibitem{DBLP:journals/mp/EpsteinL09}
Leah Epstein and Asaf Levin.
\newblock A robust {APTAS} for the classical bin packing problem.
\newblock {\em Math. Program.}, 119(1):33--49, 2009.
\newblock \href {https://doi.org/10.1007/s10107-007-0200-y}
  {\path{doi:10.1007/s10107-007-0200-y}}.

\bibitem{DBLP:conf/esa/EpsteinL11}
Leah Epstein and Asaf Levin.
\newblock Robust algorithms for preemptive scheduling.
\newblock In Camil Demetrescu and Magn{\'{u}}s~M. Halld{\'{o}}rsson, editors,
  {\em Algorithms - {ESA} 2011 - 19th Annual European Symposium,
  Saarbr{\"{u}}cken, Germany, September 5-9, 2011. Proceedings}, volume 6942 of
  {\em Lecture Notes in Computer Science}, pages 567--578. Springer, 2011.
\newblock \href {https://doi.org/10.1007/978-3-642-23719-5\_48}
  {\path{doi:10.1007/978-3-642-23719-5\_48}}.

\bibitem{DBLP:journals/siamjo/EpsteinL13}
Leah Epstein and Asaf Levin.
\newblock Robust approximation schemes for cube packing.
\newblock {\em {SIAM} J. Optim.}, 23(2):1310--1343, 2013.
\newblock \href {https://doi.org/10.1137/11082782X}
  {\path{doi:10.1137/11082782X}}.

\bibitem{DBLP:conf/icalp/FeldkordFGGKRW18}
Bj{\"{o}}rn Feldkord, Matthias Feldotto, Anupam Gupta, Guru Guruganesh, Amit
  Kumar, S{\"{o}}ren Riechers, and David Wajc.
\newblock Fully-dynamic bin packing with little repacking.
\newblock In Ioannis Chatzigiannakis, Christos Kaklamanis, D{\'{a}}niel Marx,
  and Donald Sannella, editors, {\em 45th International Colloquium on Automata,
  Languages, and Programming, {ICALP} 2018, July 9-13, 2018, Prague, Czech
  Republic}, volume 107 of {\em LIPIcs}, pages 51:1--51:24. Schloss Dagstuhl -
  Leibniz-Zentrum f{\"{u}}r Informatik, 2018.
\newblock \href {https://doi.org/10.4230/LIPIcs.ICALP.2018.51}
  {\path{doi:10.4230/LIPIcs.ICALP.2018.51}}.

\bibitem{DBLP:journals/talg/GalvezSV20}
Waldo G{\'{a}}lvez, Jos{\'{e}}~A. Soto, and Jos{\'{e}} Verschae.
\newblock Symmetry exploitation for online machine covering with bounded
  migration.
\newblock {\em {ACM} Trans. Algorithms}, 16(4):43:1--43:22, 2020.
\newblock \href {https://doi.org/10.1145/3397535} {\path{doi:10.1145/3397535}}.

\bibitem{DBLP:conf/soda/GuptaKS14}
Anupam Gupta, Amit Kumar, and Cliff Stein.
\newblock Maintaining assignments online: Matching, scheduling, and flows.
\newblock In Chandra Chekuri, editor, {\em Proceedings of the Twenty-Fifth
  Annual {ACM-SIAM} Symposium on Discrete Algorithms, {SODA} 2014, Portland,
  Oregon, USA, January 5-7, 2014}, pages 468--479. {SIAM}, 2014.
\newblock \href {https://doi.org/10.1137/1.9781611973402.35}
  {\path{doi:10.1137/1.9781611973402.35}}.

\bibitem{DBLP:journals/siamcomp/HochbaumS88}
Dorit~S. Hochbaum and David~B. Shmoys.
\newblock A polynomial approximation scheme for scheduling on uniform
  processors: Using the dual approximation approach.
\newblock {\em {SIAM} J. Comput.}, 17(3):539--551, 1988.
\newblock \href {https://doi.org/10.1137/0217033} {\path{doi:10.1137/0217033}}.

\bibitem{DBLP:journals/siamdm/Jansen10}
Klaus Jansen.
\newblock An {EPTAS} for scheduling jobs on uniform processors: Using an {MILP}
  relaxation with a constant number of integral variables.
\newblock {\em {SIAM} J. Discret. Math.}, 24(2):457--485, 2010.
\newblock \href {https://doi.org/10.1137/090749451}
  {\path{doi:10.1137/090749451}}.

\bibitem{DBLP:journals/mor/SandersSS09}
Peter Sanders, Naveen Sivadasan, and Martin Skutella.
\newblock Online scheduling with bounded migration.
\newblock {\em Math. Oper. Res.}, 34(2):481--498, 2009.
\newblock \href {https://doi.org/10.1287/moor.1090.0381}
  {\path{doi:10.1287/moor.1090.0381}}.

\bibitem{DBLP:journals/mor/SkutellaV16}
Martin Skutella and Jos{\'{e}} Verschae.
\newblock Robust polynomial-time approximation schemes for parallel machine
  scheduling with job arrivals and departures.
\newblock {\em Math. Oper. Res.}, 41(3):991--1021, 2016.
\newblock \href {https://doi.org/10.1287/moor.2015.0765}
  {\path{doi:10.1287/moor.2015.0765}}.

\bibitem{DBLP:journals/jal/Westbrook00}
Jeffery~R. Westbrook.
\newblock Load balancing for response time.
\newblock {\em J. Algorithms}, 35(1):1--16, 2000.
\newblock \href {https://doi.org/10.1006/jagm.2000.1074}
  {\path{doi:10.1006/jagm.2000.1074}}.

\end{thebibliography}

\end{document}